\newtheorem{theorem}{Theorem}
\newtheorem{proposition}[theorem]{Proposition}
\newenvironment{proof}[1][Proof]{\begin{trivlist}
\item[\hskip \labelsep {\bfseries #1}]}{\end{trivlist}}
\newcommand{\qed}{\nobreak \ifvmode \relax \else
      \ifdim\lastskip<1.5em \hskip-\lastskip
      \hskip1.5em plus0em minus0.5em \fi \nobreak
      \vrule height0.75em width0.5em depth0.25em\fi}
\newcommand{\bm}{\mathbf}
\newcommand{\be}{\begin{equation}}
\newcommand{\ee}{\end{equation}}
\newcommand{\bse}{\begin{subequations}}
\newcommand{\ese}{\end{subequations}}
\newcommand{\bea}{\begin{eqnarray}}
\newcommand{\eea}{\end{eqnarray}}
\newcommand{\f}{{\bm f}}
\newcommand{\bA}{{\bm A}}
\newcommand{\bI}{{\bm I}}
\newcommand{\bR}{{\bm R}}
\newcommand{\bW}{{\bm W}}
\newcommand{\bF}{{\bf F}}
\newcommand{\bD}{{\bf D}}
\newcommand{\bS}{{\bf S}}
\newcommand{\bG}{{\bf G}}
\newcommand{\bH}{{\bf H}}
\newcommand{\bX}{{\bf X}}
\newcommand{\bp}{{\bm p}}
\newcommand{\bP}{{\bm P}}
\newcommand{\bh}{{\bf h}}
\newcommand{\bY}{{\bf Y}}
\newcommand{\bzero}{{\bf 0}}
\newcommand{\blambda}{\mbox{\boldmath$\lambda$}}
\newcommand{\bGamma}{\mbox{\boldmath$\Gamma$}}
\newcommand{\bPsi}{\mbox{\boldmath$\Psi$}}
\newcommand{\brho}{\mbox{\boldmath{$\rho$}}}
\newcommand{\expect}{{\mathbb{E}}}
\begin{document}

\raggedbottom

\title{Joint Channel Estimation and Equalization in Massive MIMO Using a Single Pilot Subcarrier}
\author{\normalsize Danilo Lelin Li, Arman Farhang \vspace{-0.7cm}
\thanks{Danilo Lelin Li and Arman Farhang are with the Department of Electronic and Electrical Engineering, Trinity College Dublin, Dublin 2, Ireland
(Email: \{lelinlid,arman.farhang\}@tcd.ie).}}

\maketitle

\begin{abstract}

The focus of this letter is on the reduction of the large pilot overhead in orthogonal frequency division multiplexing (OFDM) based massive multiple-input multiple-output (MIMO) systems. We propose a novel joint channel estimation and equalization technique that requires only one pilot subcarrier, reducing the pilot overhead by orders of magnitude. We take advantage of the coherent bandwidth spanning over multiple subcarrier bands. This allows for a band of subcarriers to be equalized with the channel frequency response (CFR) at a single subcarrier. Subsequently, the detected data symbols are considered as virtual pilots, and their CFRs are updated without additional pilot overhead. Thereafter, the remaining channel estimation and equalization can be performed in a sliding manner. With this approach, we use multiple channel estimates to equalize the data at each subcarrier. This allows us to take advantage of frequency diversity and improve the detection performance. Finally, we corroborate the above claims through extensive numerical analysis, showing the superior performance of our proposed technique compared to conventional methods.

\end{abstract}

\begin{IEEEkeywords}
OFDM, massive MIMO, channel estimation, linear combining, spatial diversity, spectral efficiency.
\end{IEEEkeywords}

\vspace{-0.4cm}
\section{Introduction}
\vspace{-0.1cm}

Massive multiple-input multiple-output (MIMO) has brought substantial improvements to 5G systems and it continues to be a dominant technology in next-generation networks. In particular, spatial diversity and beamforming gains provided by massive MIMO have led to significant improvements in capacity, spectral, and energy efficiency compared to $4$G networks  \cite{7880691}. 
In massive MIMO, multiple users reuse the same time-frequency resources, 
as their signals can be distinguished based on their channel responses.
Therefore, obtaining accurate channel state information (CSI) at each base station (BS) antenna is crucial. Conventional channel estimation methods, e.g. least squares (LS), rely on the transmission of pilot sequences from each user. These methods are favorable due to their low complexity. However, this comes at the expense of spectral efficiency (SE) loss. 

Hence, extensive research has been conducted to reduce the pilot overhead, see \cite{Semiblind,MICED,Superimposed} and the references therein. For instance, (semi) blind techniques emphasize    reducing the pilot length and estimating the channel by solving underdetermined systems of equations using iterative algorithms \cite{Semiblind}. Moreover, superimposed pilots share the same time-frequency resources for both data and pilot transmission. In these techniques, the channel is estimated by treating data as noise. This can completely remove the pilot overhead, however, at the cost of interference and extra computational load \cite{MICED,Superimposed}. Most works on massive MIMO consider a narrow band channel. Therefore, utilizing these channel estimation techniques for orthogonal frequency division multiplexing (OFDM) would require pilots on each subcarrier. This is while it is shown in \cite{L}
that the number of pilot subcarriers needs to be at least equal to the channel length. However, this leads to a large overhead, as the channel length varies from $7\%$ to $25\%$ of the OFDM symbol duration, \cite{3gpp2018nr}. To alleviate this issue, the authors in \cite{L} proposed a technique that can reduce the required pilot subcarriers by $80\%$, however, the technique is limited to sparse channels.

To address the aforementioned limitations of the existing literature, in this letter, we show that only one reference subcarrier is sufficient for both channel estimation and data detection in massive MIMO. It is worth noting that our proposed technique is 
independent of channel sparsity and length. We take advantage of multiple adjacent subcarriers being within the channel coherent bandwidth. Thus, the transmitted data over a band of subcarriers can be detected using only one reference subcarrier's channel estimate. Thereafter, the detected data symbols are considered as virtual pilots to update their corresponding channel estimates. Each updated channel estimate is then used to equalize a new band of subcarriers within the coherent bandwidth. This way, we can obtain multiple estimates of the transmitted data symbols at each subcarrier and average them. With this approach, we take advantage of the frequency diversity apart from the spatial diversity to achieve an improved performance. By repeating this process in a sliding manner, all the data symbols and the CFRs can be estimated, starting from one reference pilot.

Furthermore, we prove that in the large antenna regime, linear combining can be performed using the channel estimate at any subcarrier. The combined signal at a subcarrier will be scaled by a coefficient, which depends on the channel power delay profile (PDP) and frequency spacing between the subcarrier and the CFR used for combining. However, this coefficient can take very small values leading to noise enhancement. To avoid this issue, we define a \textit{depth} factor in our proposed sliding technique that determines the number of subcarriers within each band to be equalized using the same CFR. This ensures that only large values of the aforementioned coefficient are considered. To evaluate the effectiveness of our proposed technique, we compare its bit error rate (BER) and signal-to-interference and noise ratio (SINR) performance with the conventional techniques. For $15$~MHz transmission bandwidth, the existing techniques require over $70$ pilot subcarriers, whereas our proposed technique reduces this overhead to only one pilot subcarrier. Moreover, our proposed technique outperforms the linear equalizers by around $2$~dB, at high signal-to-noise ratios (SNRs), in terms of the BER performance.

{\it Notations}: Matrices, vectors, and scalar quantities are denoted by boldface uppercase, boldface lowercase, and normal
letters, respectively. $[\bA]_{m,n}$
represents the element on row $m$ and column $n$ of $\bA$. ${\rm{tr}}\{\bA \}$ represents the trace of $\bA$. $\bI_M$ and $\textbf{0}_{M\times N}$ are the $M \times M$ identity and $M \times N$ zero matrices, respectively. The superscripts $(\cdot)^{\dagger}$, $(\cdot)^{\rm H}$, $(\cdot)^{\rm T}$ and $(\cdot)^{*}$ indicate pseudo inverse, Hermitian, transpose and conjugate operations, respectively. $|\cdot |$, $(( \cdot ))_{M}$ and $\expect\{\cdot \}$ are the absolute value, modulo-$M$, and expected value operators, respectively. Finally, $\bF_M$ is the normalized $M$-point discrete Fourier transform (DFT) matrix with the elements $[\bF_M]_{m,n}$ = $\frac{1}{\sqrt{M}}e^{\frac{-j2\pi mn}{M}}$, for $m,~n = 0,..., M-1$ and $\f_{M,m}$ is the $m^{\rm th}$ column of $\bF_M$.

\vspace{-0.25cm}
\section{System Model}
\label{sec:sysmod}
\vspace{-0.1cm}

We consider the uplink (UL) of a single-cell massive MIMO system operating in the time-division duplex (TDD) mode. OFDM with $M$ subcarriers  is deployed as the modulation format, with the cyclic prefix (CP) of length $M_{\rm CP}$. $M_{\rm CP}$ is chosen to be larger than the channel length to avoid intersymbol interference. The BS is equipped with $Q$ antennas and serves $K$ single antenna users. The duration of each frame is assumed to be within the channel coherence time, including $N$ number of  OFDM  time symbols in the UL. Hence, the channel remains time-invariant within each frame.

The UL transmission is divided into two phases, training/pilot and data transmission. $N_{\rm{p}}$ and $N_{\rm{d}}$ time slots are allocated to pilot and data transmission, respectively. Thus, a given user $k$ transmits the time-frequency symbols $\bX^{k}=[\bP^{k},\bD^{k}]$, where $\bP^{k} \in \mathbb{C}^{M\times N_{\rm{p}}}$ represents the pilot and $\bD^{k} \in \mathbb{C}^{M\times N_{\rm{d}}}$ the data symbols. Hence, the OFDM transmit signal of user $k$ is obtained as $\bS^{k}= \bA_{\rm{CP}} \bF_M^{\rm H} \bX^{k}$, where $\bA_{\rm{CP}} = [\bG_{\rm{CP}}^{\rm T} , \bI_M ]^{\rm T}$ is the CP addition matrix and the $M_{\rm{CP}} \times M$ matrix $\bG_{\rm{CP}}$ includes the last $M_{\rm{CP}}$ rows of $\bI_M$. 

The signal is transmitted through the channel and undergoes OFDM demodulation. Thus, the received signal from all the users at a given BS antenna $q$ is obtained as 
\begin{align}\label{eq:matR}
    \begin{split}
        \bY_{q}&=\bF_M \bR_{\rm{CP}} \sum^{K-1}_{k=0} \bH_{q,k} \bS^{k} + \bW_{q}\\
        &= \sum^{K-1}_{k=0} \rm{diag}\{ \blambda_{q,k} \} \bX^k + \bW_{q},
    \end{split}
\end{align}
where $\bR_{\rm{CP}}=[\bzero_{M \times M_{\rm{CP}}},\bI_M]$ is the CP removal matrix, $\bH_{q,k}$ denotes the Toeplitz channel matrix realizing the linear convolution, which is formed by  the channel impulse response (CIR) between user $k$ and antenna $q$, i.e., $\bh_{q,k} =[h_{q,k}[0],\ldots,h_{q,k}[L-1]]^{\rm T}$, and $\bW_{q} $ includes the complex additive white Gaussian noise (AWGN) in the frequency domain{, with the variance $\sigma_w^2$, i.e., $[{\bW}_{q}]_{m,n} \sim \mathcal{CN}(0,\sigma_w^2)$}. We assume the CIR between the UEs and the BS antennas to be independent and identically distributed (i.i.d.) complex random variables with the length $L$, $\bh_{q,k} \sim \mathcal{CN}(\bzero_{L\times 1},\boldsymbol{\Sigma}_{k})$ for $q=0,\ldots, Q-1$ and $k=0,\ldots, K-1$. $\boldsymbol{\Sigma}_{k}$ is a diagonal matrix with the diagonal elements formed by the PDP of the channel $\brho_{k} = [\rho_{k}[0],\ldots,\rho_{k}[L-1]]^{\rm{T}}$, where the normalized PDP is considered, i.e., $\sum^{L-1}_{l=0} \rho_{k}[l] = 1$.  As $M_{\rm{CP}} \geq L-1$, $\widetilde{\bH}_{q,k}=\bR_{\rm{CP}} \bH_{q,k} \bA_{\rm{CP}}$ is a circulant matrix, with the first column formed by the zero-padded CIR, i.e., $\widetilde{\bh}_{q,k}=[\bh^{\rm{T}}_{q,k},\bzero^{\rm{T}}_{M-L\times 1}]^{\rm{T}}$. Hence, the channel matrix $\widetilde{\bH}_{q,k}$ can be diagonalized by the DFT and inverse DFT matrices as $\bF_M \widetilde{\bH}_{q,k} \bF_M^{\rm H}=\rm{diag}\{ \blambda_{q,k} \}$, where $\blambda_{q,k} = [\lambda_{q,k}[0],\dots,\lambda_{q,k}[M-1] ]^{\rm{T}}$ represents the CFR with the elements obtained as  
\be\label{eq:lambda}
    \lambda_{q,k}[m]=\sqrt{M} \f_{M,m}^{\rm{T}}\widetilde{\bh}_{q,k}.
\ee

To pave the way toward the derivations in the following sections, we rearrange the received signals into the space-time representation. 
By stacking the received samples at a given subcarrier $m$, i.e., the $m^{\rm{th}}$ row of $\bY_q$ from all the $Q$ receive antennas, we obtain the $Q \times N$ space-time matrix $\overline{\bY}_{m} $. Thus, the input-output relationship for a given subcarrier $m$ across all the antennas can be represented as 
\be\label{eq:freqbin}
    \overline{\bY}_{m}=\overline{\boldsymbol{\Lambda}}_{m} \overline{\bX}_{m} + \overline{\bW}_{m},
\ee
where the elements of the matrices $\overline{\boldsymbol{\Lambda}}_{m}\in \mathbb{C}^{Q\times K}$, $\overline{\bX}_{m}\in \mathbb{C}^{K\times N}$ and $\overline{\bW}_{m} \in \mathbb{C}^{Q \times N}$ are given by $[\overline{\boldsymbol{\Lambda}}_{m}]_{q,k}=\lambda_{q,k}[m]$, $[\overline{\bX}_{m}]_{k,n}=[\bX^{k}]_{m,n}$ and $[\overline{\bW}_{m}]_{q,n}=[{\bW}_{q}]_{m,n}$, respectively. 

Finally, $\overline{\bX}_{m} \triangleq [\overline{\bP}_{m},\overline{\bD}_{m}]$, where $\overline{\bP}_{m}$ and $\overline{\bD}_{m}$ represent the transmitted pilot and data symbols, respectively.
Correspondingly, $\overline{\bY}_{m} \triangleq [\overline{\bY}^{\rm{p}}_{m},\overline{\bY}^{\rm{d}}_{m}]$ and $\overline{\bW}_{m} \triangleq [\overline{\bW}^{\rm{p}}_{m},\overline{\bW}^{\rm{d}}_{m}]$.

\vspace{-0.3cm}
\section{Conventional Channel Estimation and Equalization}
\label{sec:CEandE}
\vspace{-0.1cm}
 
In this section, the widely used linear pilot-based channel estimation and equalization techniques are described. In particular, we consider LS-based channel estimation, and two linear combining techniques, namely MRC and MMSE equalization techniques. At the channel sounding stage, each user transmits a pilot sequence with the length $N_{\rm{p}} \geq K$ over different time slots on a given subcarrier allocated to pilot symbols \cite{7880691}.   The pilot sequence of length $N_{\rm{p}} $ for a given user $k$ at a given subcarrier $m$, $\overline{\bp}^k_m$, is chosen from a pilot book $\overline{\bP}_m=[\overline{\bp}^0_m,\overline{\bp}^1_m,\dots,\overline{\bp}_m^{K-1}]^{\rm T}$ that satisfies the orthogonality property $\overline{\bP}_m\overline{\bP}_m^{{\rm{H}}} = N_{\rm{p}}\bI_K$, \cite{7294693}. We consider a pilot book where each $\overline{\bp}_m^k$ is obtained from $k$ cyclic shifts of the Zadoff-Chu (ZC) root sequence of length $N_{\rm{p}}$. Using \eqref{eq:freqbin}, the received pilots from all the users at subcarrier $m$ can be represented as $  \overline{\bY}_{m}^{\rm{p}}= \overline{\boldsymbol{\Lambda}}_{m} \overline{\bP}_m + \overline{\bW}^{\rm{p}}_{m}$.
Consequently, the channel response for all the users at a given subcarrier $m$ can be estimated as
\be\label{eq:CE}
  \widehat{\boldsymbol{\Lambda}}_{m}= \frac{1}{N_{\rm{p}}}  \overline{\bY}_{m}^{\rm{p}}\overline{\bP}_m^{{\rm{H}}}= \overline{\boldsymbol{\Lambda}}_{m}  + \frac{1}{N_{\rm{p}}}  \overline{\bW}^{\rm{p}}_{m}\overline{\bP}_m^{{\rm{H}}} \triangleq \overline{\boldsymbol{\Lambda}}_{m}  + \widetilde{\bW}_{m},
\ee
where $\widetilde{\bW}_{m}=\frac{1}{N_{\rm{p}}}  \overline{\bW}^{\rm{p}}_{m}\overline{\bP}_m^{{\rm{H}}}$.

To estimate the whole CFR, it is sufficient to allocate $L$ subcarriers as pilots \cite{L}. Let  $\mathcal{I}$ represent the set of subcarrier indices allocated to the pilot, and $\blambda_{q,k}^{\mathcal{I}}$ the vector formed by selecting the $L$ entries of $\blambda_{q,k}$ at the pilot subcarriers, each element of $\blambda_{q,k}^{\mathcal{I}}$ can be estimated from \eqref{eq:CE}, as $\lambda_{q,k}[m]=[\widehat{\boldsymbol{\Lambda}}_{m}]_{q,k}$. From  \eqref{eq:lambda}, the following relation between CFR and CIR holds
\be\label{eq:CFRtoCIR}
    \blambda_{q,k}^{\mathcal{I}}=\sqrt{M} \bF_M^{\mathcal{I}} {\bh}_{q,k},
\ee
where $\bF_M^{\mathcal{I}}$ is a $L\times L$ matrix formed by selecting the $L$ first columns of $\bF_M$ with the rows indexed by $\mathcal{I}$. Hence, the CIR of user $k$ at a given BS antenna $q$ can be estimated by solving \eqref{eq:CFRtoCIR} for ${\bh}_{q,k}$, i.e., ${\bh}_{q,k}=\frac{1}{\sqrt{M}} (\bF_M^{\mathcal{I}})^{-1} \blambda_{q,k}^{\mathcal{I}}$. Finally, after obtaining the CIR, $\widehat{\boldsymbol{\Lambda}}_{m}$ at each subcarrier can be reconstructed with \eqref{eq:lambda}. Therefore, the channel estimation process, \eqref{eq:CE} and \eqref{eq:CFRtoCIR}, requires a minimum of $LK$ pilots.

Using the channel estimates $\widehat{\boldsymbol{\Lambda}}_{m}$, and deploying a linear combining technique such as MRC, the transmit data symbols for all the users can be  estimated as 
\be\label{eq:mf}
\widehat{\bX}^{{\rm{MRC}}}_{m}= \bGamma_m^{-1} \widehat{\boldsymbol{\Lambda}}_{m}^{\rm{H}} \bY^{\rm{d}}_{m} = \bGamma_m^{-1} \widehat{\boldsymbol{\Lambda}}_{m}^{\rm{H}} (\overline{\boldsymbol{\Lambda}}_{m} \bD_m + {\bW}_{m}^{\rm{d}}), 
\ee
{where $\bGamma_m$ is a $K\times K$ diagonal matrix, with the $k^{\rm{th}}$ diagonal element given by $\frac{1}{Q}\sum^{Q-1}_{q=0} |\lambda_{q,k}[m]|^2$, considered solely to normalize the amplitude of the equalizer output}. In the literature, the effect of channel estimation noise is often neglected and the diagonal elements of $\bGamma_m$ are formed from the diagonal elements of $\overline{\boldsymbol{\Lambda}}_{m}^{\rm{H}} \overline{\boldsymbol{\Lambda}}_{m}$, i.e., the norm squared of the channel vector for each user at a given subcarrier $m$. However, in the presence of channel estimation errors, using \eqref{eq:CE}, the normalization factors on the elements of $\overline{\boldsymbol{\Lambda}}_{m}^{\rm{H}} \overline{\boldsymbol{\Lambda}}_{m}$ are obtained as the diagonal elements of
\be
    \widehat{\boldsymbol{\Lambda}}_{m}^{\rm{H}} \widehat{\boldsymbol{\Lambda}}_{m} = \widehat{\boldsymbol{\Lambda}}_{m}^{\rm{H}} \overline{\boldsymbol{\Lambda}}_{m} + \overline{\boldsymbol{\Lambda}}_{m}^{\rm{H}}\widetilde{\bW}_{m} + \frac{1}{N_{\rm{p}}^2}  \overline{\bP}_m(\overline{\bW}^{\rm{p}}_{m})^{\rm{H}} \overline{\bW}^{\rm{p}}_{m}\overline{\bP}_m^{{\rm{H}}}.
\ee
The term $\overline{\boldsymbol{\Lambda}}_{m}^{\rm{H}}\widetilde{\bW}_{m}$ tends to zero in the asymptotic regime, as $Q \to \infty$, since the channel gain and noise are independent. However, the same is not true for $(\overline{\bW}^{\rm{p}}_{m})^{\rm{H}} \overline{\bW}^{\rm{p}}_{m}$. As $\overline{\bW}^{\rm{p}}_{m}$ represents the AWGN, in the asymptotic regime, we have
\be\label{eq:CEnoise}
 \left(\overline{\boldsymbol{\Lambda}}_{m}^{\rm{H}}\widetilde{\bW}_{m} + \frac{1}{N_{\rm{p}}^2}  \overline{\bP}_m(\overline{\bW}^{\rm{p}}_{m})^{\rm{H}} \overline{\bW}^{\rm{p}}_{m}\overline{\bP}_m^{{\rm{H}}}\right) \to \frac{Q\sigma_w^2}{N^2_{\rm{p}}} \overline{\bP}_m\overline{\bP}_m^{{\rm{H}}}.
\ee
Hence, taking a similar approach to \cite{Hamed}, the channel estimation noise can be mitigated by subtracting \eqref{eq:CEnoise} from the normalization factor, i.e., $\bGamma_m$ is formed from the diagonal elements of $(\widehat{\boldsymbol{\Lambda}}_{m}^{\rm{H}} \widehat{\boldsymbol{\Lambda}}_{m}-\frac{Q\sigma_w^2}{N^2_{\rm{p}}} \overline{\bP}_m\overline{\bP}_m^{{\rm{H}}})$.

Due to the channel hardening effect \cite{7880691}, in the asymptotic regime, as $Q \to \infty$, we have 
\begin{equation}\label{eq:fp}
  \frac{\overline{\boldsymbol{\lambda}}_{m,k}^{\rm{H}}\overline{\boldsymbol{\lambda}}_{m,k'}}{Q} \to
    \begin{cases}
      \expect \{ |\overline{\lambda}_{m,k}[q]|^2 \}=1 & \text{if } k=k'\\
      0 & \text{otherwise}
    \end{cases},
\end{equation}
where {$\overline{\boldsymbol{\lambda}}_{m,k}$, of length $Q$,} represents the $k^{\rm{th}}$ column of $\overline{\boldsymbol{\Lambda}}_{m}$. Therefore, as $Q \to \infty$, $\bGamma_m^{-1} \widehat{\boldsymbol{\Lambda}}_{m}^{\rm{H}} \overline{\boldsymbol{\Lambda}}_{m} \to \bI_K$.

In practical systems, the number of BS antennas is limited, and the off-diagonal elements of  $\widehat{\boldsymbol{\Lambda}}_{m}^{\rm{H}} \overline{\boldsymbol{\Lambda}}_{m}$, in \eqref{eq:mf}, lead to a significant amount of  interference. Thus, MMSE combining, 
\be\label{eq:MMSEcomb}
\widehat{\bX}^{{\rm{MMSE}}}_{m}= \boldsymbol{\Phi}_{m}^{\rm{MMSE}} \overline{\bY}^{\rm{d}}_{m},
\ee
where $\boldsymbol{\Phi}_{m}^{\rm{MMSE}} \triangleq (\widehat{\boldsymbol{\Lambda}}_{m}^{\rm{H}} \widehat{\boldsymbol{\Lambda}}_{m}- \frac{Q\sigma_w^2}{N^2_{\rm{p}}} \overline{\bP}_m\overline{\bP}_m^{{\rm{H}}} + \sigma_w^2\bI_{K} )^{-1} \widehat{\boldsymbol{\Lambda}}_{m}^{\rm{H}}$, provides an improved performance  compared to MRC.

\vspace{-0.2cm}
\section{Proposed Channel Estimation and Combining}
\label{sec:PropT}
\vspace{-0.1cm}

The linear combining techniques presented in the previous section take advantage of the channel hardening effect and spatial diversity gains in massive MIMO. However, it requires knowledge of the whole CFR, and therefore, a minimum of $LK$ time-frequency slots are allocated to the pilots for channel estimation. In practical systems, $L$ can take large values, especially as the bandwidth increases. Hence, in this section, we study the possibility of reducing the pilot overhead to a single pilot subcarrier. Therefore, only $K$ time-frequency slots would be necessary for pilot allocation, reducing the training overhead by a factor of $L$.

In current standards, the subcarrier spacing $\Delta f$ is chosen to be much shorter than the coherent bandwidth \cite{etsi38138}. Hence, the channel can be considered flat across the frequency band of multiple adjacent subcarriers. {This suggests that the channel at a single subcarrier can be considered for the equalization of its adjacent subcarriers. Moreover, the correlation between the CFR of adjacent subcarriers can be obtained from the following approximation \cite{hlawatsch2011wireless}}
\be\label{eq:approxalpha}
    |{\alpha}_{\Delta m,k}| \approx \sqrt{1-\left(\frac{\Delta f \Delta m}{F_{\rm{c}}}\right)^2},
\ee
where we define ${\alpha}_{\Delta m,k} \triangleq \expect \{ \overline{\lambda}_{m,k}^*[q] \overline{\lambda}_{(( m+\Delta m))_{M},k}[q]  \}$, $F_{\rm{c}}=\frac{1}{\ell_{\tau}}$ is the coherence bandwidth and $\ell_{\tau}$ is the maximum delay spread of the channel. 

Taking this into account, we consider a reference pilot in  one subcarrier and utilize its channel estimate for equalization of the neighboring subcarriers. The detected data symbols will be considered as virtual pilots, updating their channel estimates and allowing for the equalization of data at their adjacent subcarriers. We repeat this process, in a sliding manner, until all the data and the CFR in the UL frame are estimated. 
Hence, the data $\overline{\bX}_{m }$, at any given subcarrier $m$, can be equalized using \eqref{eq:MMSEcomb} and \eqref{eq:approxalpha} with the CFR at subcarrier $m-1$ as 
\be\label{eq:propMMSE}
\widehat{\bX}^{{\rm{MMSE}}}_{m}= \frac{1}{|{\alpha}_{1,k}|} \boldsymbol{\Phi}_{m-1}^{\rm{MMSE}} \overline{\bY}_{m}.
\ee
In \eqref{eq:propMMSE}, we consider the approximation $(\widehat{\boldsymbol{\Lambda}}_{m})^{\rm{H}} \widehat{\boldsymbol{\Lambda}}_{m} \approx (\widehat{\boldsymbol{\Lambda}}_{m-1})^{\rm{H}} \widehat{\boldsymbol{\Lambda}}_{m-1}${, and the phase of ${\alpha}_{1,k}$ to be negligible}.

After hard decision of the QAM symbols in $\widehat{\bX}^{{\rm{MMSE}}}_{m}$ to obtain $\widehat{\bX}^{{\rm{HD}}}_{m}$, the CFR at subcarrier $m$ is updated as
\be\label{eq:propLS}
    \widehat{\boldsymbol{\Lambda}}_{m}= \overline{\bY}_{m} (\widehat{\bX}^{{\rm{HD}}}_{m})^{\dagger}.
\ee
This channel estimate is then used for equalization of the following subcarrier, $m+1$, using \eqref{eq:propMMSE}. { In particular, since we now consider $\widehat{\bX}_{m}$ as the pilot, the noise mitigation term in \eqref{eq:CEnoise} should be corrected accordingly. That is, the term $\overline{\bP}_m^{\rm{H}}$ in $\bGamma_m$ and $\boldsymbol{\Phi}_{m}^{\rm{MMSE}}$ should be substituted by $(\widehat{\bX}^{{\rm{HD}}}_{m})^{\dagger}$.  Then the procedure is repeated in a sliding manner until the whole frame is equalized. If the detected symbols $\widehat{\bX}^{{\rm{HD}}}_{m}$ are erroneous, \eqref{eq:propLS} provides imperfect channel estimates, and hence, the sliding technique propagates the error to further subcarriers. Therefore, as we will show in the later part of this section, we propose the concept of depth that significantly improves the accuracy of $\widehat{\bX}^{{\rm{HD}}}_{m}$, alleviating the error propagation issue. }

It is worth noting that $\overline{\bX}_{m}$ can be a rank-deficient matrix. Specifically, if the rank of $\overline{\bX}_{m}$ is lower than $K$, \eqref{eq:propLS} fails to retrieve $\widehat{\boldsymbol{\Lambda}}_{m}$. {To solve this issue, we calculate  ${\alpha}_{\Delta m,k}$  for any value of $\Delta m$. This way,  $\widehat{\bX}^{{\rm{MMSE}}}_{m}$ at any subcarrier can be detected with the same reference pilot using \eqref{eq:propMMSE}.}

\begin{proposition}\label{prop:1}
In the asymptotic regime, it is possible to perform MRC with the channel estimates on a single subcarrier. In other words, the data at any given subcarrier $m$ can be equalized with the CFR at any other subcarrier $m'$.
\end{proposition}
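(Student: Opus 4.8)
The plan is to expand the MRC combiner of (\ref{eq:mf}) when the channel estimate is taken at a subcarrier $m'$ that differs from the data subcarrier $m$, and to show that in the limit $Q\to\infty$ every term collapses to a clean multiple of the desired data. Denoting by $\widehat{\boldsymbol{\lambda}}_{m',k}$ the $k$th column of $\widehat{\boldsymbol{\Lambda}}_{m'}$, I would study the per-user combiner $\frac{1}{Q}\widehat{\boldsymbol{\lambda}}_{m',k}^{\rm H}\,\overline{\bY}_m$ and insert the input--output relation (\ref{eq:freqbin}) together with the estimate model (\ref{eq:CE}), $\widehat{\boldsymbol{\Lambda}}_{m'}=\overline{\boldsymbol{\Lambda}}_{m'}+\widetilde{\bW}_{m'}$. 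This splits the output into a desired term proportional to $\frac{1}{Q}\overline{\boldsymbol{\lambda}}_{m',k}^{\rm H}\overline{\boldsymbol{\lambda}}_{m,k}$ acting on the $k$th row of $\overline{\bX}_m$, the inter-user terms $\frac{1}{Q}\overline{\boldsymbol{\lambda}}_{m',k}^{\rm H}\overline{\boldsymbol{\lambda}}_{m,k'}$ with $k'\neq k$, the data-noise term $\frac{1}{Q}\overline{\boldsymbol{\lambda}}_{m',k}^{\rm H}\overline{\bW}_m$, and the cross terms carrying the estimation noise $\widetilde{\bW}_{m'}$.

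The core of the argument is the desired term. Because the $Q$ antennas are i.i.d., the normalized inner product is an empirical mean, so by the law of large numbers $\frac{1}{Q}\overline{\boldsymbol{\lambda}}_{m',k}^{\rm H}\overline{\boldsymbol{\lambda}}_{m,k}\to\expect\{\overline{\lambda}_{m',k}^*[q]\,\overline{\lambda}_{m,k}[q]\}=\alpha_{\Delta m,k}$ with $\Delta m=m-m'$. To make the limit explicit I would write each CFR entry via (\ref{eq:lambda}) as $\overline{\lambda}_{m,k}[q]=\sum_{l=0}^{L-1}e^{-j2\pi ml/M}h_{q,k}[l]$, and use $\bh_{q,k}\sim\mathcal{CN}(\bzero_{L\times1},\boldsymbol{\Sigma}_k)$ with $\boldsymbol{\Sigma}_k$ diagonal in the PDP, so that $\expect\{h_{q,k}^*[l']h_{q,k}[l]\}=\rho_k[l]\,\delta_{l,l'}$. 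The double sum then collapses to $\sum_{l=0}^{L-1}\rho_k[l]\,e^{-j2\pi\Delta m\,l/M}$, i.e. the DFT of the power delay profile sampled at the offset $\Delta m$; this is precisely $\alpha_{\Delta m,k}$ and depends only on the PDP and the frequency spacing, consistent with the magnitude approximation (\ref{eq:approxalpha}).

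For the remaining terms I would reuse the asymptotic independence exploited in (\ref{eq:fp}). The inter-user term vanishes since distinct users have independent zero-mean channels, giving $\frac{1}{Q}\overline{\boldsymbol{\lambda}}_{m',k}^{\rm H}\overline{\boldsymbol{\lambda}}_{m,k'}\to\expect\{\overline{\lambda}_{m',k}^*[q]\}\,\expect\{\overline{\lambda}_{m,k'}[q]\}=0$; the data-noise term vanishes because $\overline{\bW}_m$ is zero-mean and independent of the channel; and the estimation-noise cross terms vanish because $\widetilde{\bW}_{m'}$ is independent of both $\overline{\boldsymbol{\Lambda}}_m$ and $\overline{\bW}_m$. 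Collecting the limits yields $\frac{1}{Q}\widehat{\boldsymbol{\lambda}}_{m',k}^{\rm H}\overline{\bY}_m\to\alpha_{\Delta m,k}\,[\overline{\bX}_m]_{k,:}$, so scaling the combiner output by $1/\alpha_{\Delta m,k}$ recovers the data and establishes that any single-subcarrier CFR suffices for MRC at subcarrier $m$.

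I expect the main obstacle to be the desired term, specifically proving that the cross-subcarrier inner product converges to the PDP-weighted coefficient $\alpha_{\Delta m,k}$ rather than to the $0$/$1$ dichotomy of the equal-subcarrier hardening result (\ref{eq:fp}), and making the antenna averaging precise. The inter-user, noise, and estimation-noise terms are routine once the independence structure is stated; the delicate point is that the recovery requires dividing by $\alpha_{\Delta m,k}$, which by (\ref{eq:approxalpha}) shrinks as $|\Delta m|$ grows and thus amplifies residual noise at finite $Q$ --- exactly the effect that motivates the depth factor introduced in this section.
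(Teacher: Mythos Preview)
Your proposal is correct and follows essentially the same route as the paper: both identify the cross-subcarrier hardening limit $\frac{1}{Q}\overline{\boldsymbol{\lambda}}_{m',k}^{\rm H}\overline{\boldsymbol{\lambda}}_{m,k}\to\alpha_{\Delta m,k}$ (with the off-diagonal and noise terms vanishing by independence) and then evaluate $\alpha_{\Delta m,k}$ as the DFT of the PDP at offset $\Delta m$. The only cosmetic difference is that you compute $\alpha_{\Delta m,k}$ by an elementary tap-by-tap expansion using $\expect\{h_{q,k}^*[l']h_{q,k}[l]\}=\rho_k[l]\delta_{l,l'}$, whereas the paper packages the same calculation via the quadratic-form identity $\expect\{\widetilde{\bh}^{\rm H}\bA\widetilde{\bh}\}={\rm tr}\{\bA\widetilde{\boldsymbol{\Sigma}}_k\}$; both yield $\alpha_{\Delta m,k}=\sqrt{M}\f_{M,((\Delta m))_M}^{\rm T}\widetilde{\brho}_k$.
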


\begin{proof}

We detect the data $\overline{\bX}_{m}$, at any given subcarrier $m$,  by performing MRC with the channel estimates at subcarrier  $m'$
\be\label{eq:propmrc}
\widehat{\bX}^{{\rm{MRC}}}_{m}= \bGamma_{m'}^{-1} \boldsymbol{\Lambda}_{m'}^{\rm{H}} \bY_{m}^{\rm{d}}. 
\ee

Considering \eqref{eq:fp}  in the asymptotic regime, as $Q \to \infty$, 
\begin{equation}\label{eq:fpdetam}
  \frac{\boldsymbol{\overline{\lambda}}_{m',k'}^{\rm{H}}\boldsymbol{\overline{\lambda}}_{m,k}}{Q} \to
    \begin{cases}
      {\alpha}_{\Delta m,k} & \text{if } k=k'\\
      0 & \text{otherwise}
    \end{cases},    
\end{equation}
where $\Delta m=m-m'$. Consequently, in \eqref{eq:propmrc}, $\bGamma_{m'}^{-1} \boldsymbol{\Lambda}_{m'}^{\rm{H}} \boldsymbol{\Lambda}_{m} \to \bPsi_{\Delta m}$, where $\bPsi_{\Delta m}$ is a diagonal matrix with the diagonal elements formed by the vector $[\alpha_{\Delta m,0}, \dots , \alpha_{\Delta m, K-1}]$.

To obtain the {exact}  value of $\alpha_{\Delta m,k}$, we substitute the CFRs with the expression given in \eqref{eq:lambda}. This yields a quadratic form, with the expectation defined as \cite{mathai1992quadratic}
\be\label{eq:trace}
    \expect \{ \widetilde{\bh}_{q,k}^{\rm H}(M \f_{M,m}^{*} \f_{M,(( m+\Delta m))_{M}}^{\rm{T}})\widetilde{\bh}_{q,k}\}= {\rm{tr}}\{\bA \widetilde{\boldsymbol{\Sigma}}_{k}\} + \boldsymbol{\mu}'\bA\boldsymbol{\mu},
\ee
where $\boldsymbol{\mu}$ and $\widetilde{\boldsymbol{\Sigma}}_{k}$ represent the mean and covariance matrices of $\widetilde{\bh}_{q,k}$, respectively, and  $\bA = M\f_{M,m}^{*} \f_{M,(( m+\Delta m))_{M}}^{\rm{T}}$. With the parameters presented in section~\ref{sec:sysmod}, it is given that the channel is zero mean, i.e.,  $\boldsymbol{\mu} = \bzero_{M\times 1}$, and $\widetilde{\boldsymbol{\Sigma}}_{k}$ is a diagonal matrix with the diagonal elements formed by the vector $\widetilde{\brho}_{k}=[{\brho}_{k}^{\rm{T}},\bzero^{\rm{T}}_{M-L\times 1}]^{\rm{T}}$. Hence, \eqref{eq:trace} reduces to
\be\label{eq:lambdadelta}
   \alpha_{\Delta m,k}={\rm{tr}}\{\bA \widetilde{\boldsymbol{\Sigma}}_{k}\} = \sqrt{M} \f_{M,(( \Delta m))_{M}}^{{\rm T}}\widetilde{\brho}_k.
\ee

Based on this result, as long as the PDP is known, equalization can be performed using \eqref{eq:propmrc} followed by scaling the MRC output for each subcarrier by $\frac{1}{\alpha_{\Delta m,k}}$. 
\qed\vspace{-0.1cm}

\end{proof} 

\begin{figure*}[!ht]
\begin{multicols}{3}
\centering
   \includegraphics[scale=0.39]{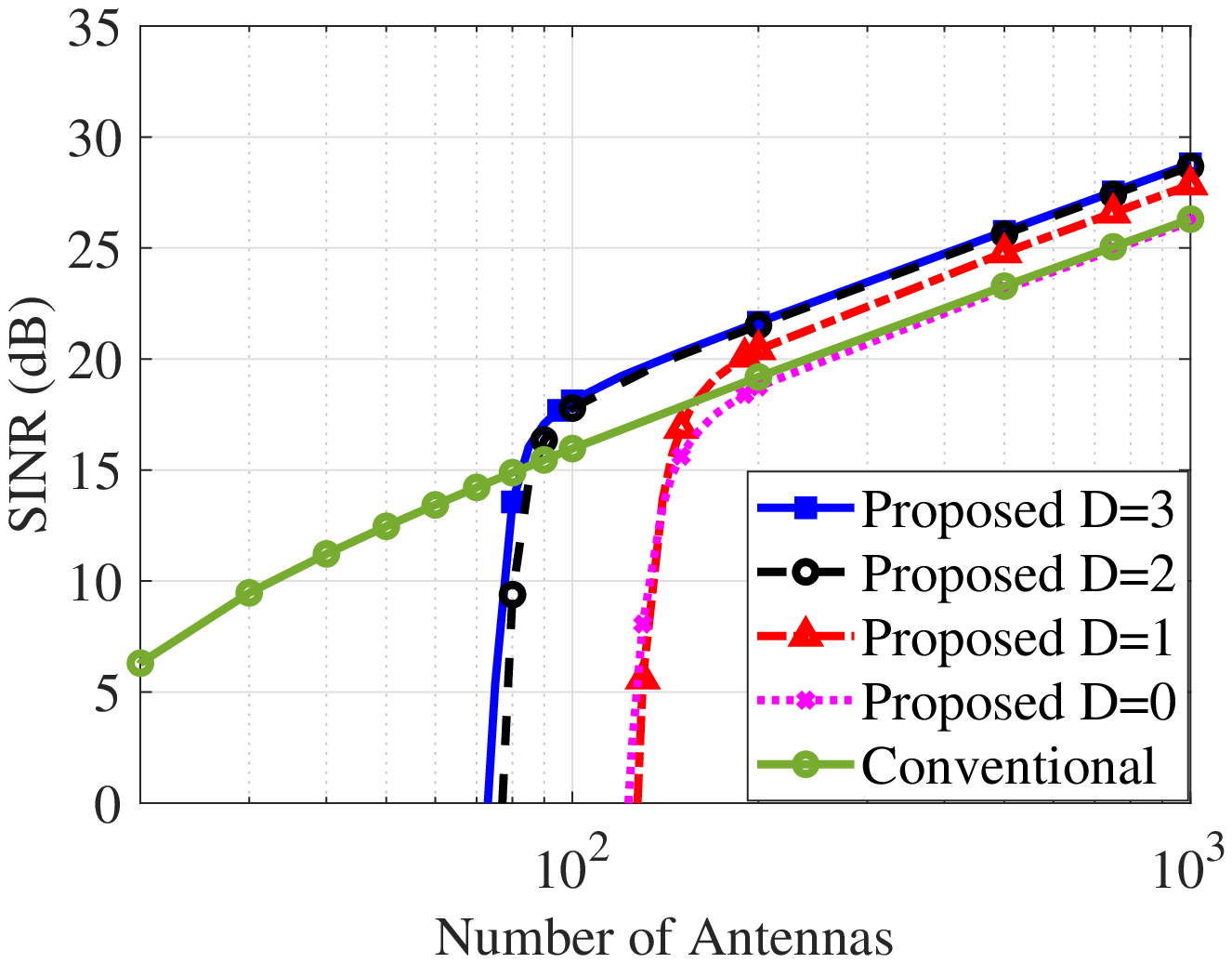}\par \vspace{-0.3cm}\caption{SINR performance as a function of the number of BS antennas.} \label{fig:SINR}
   \includegraphics[scale=0.39]{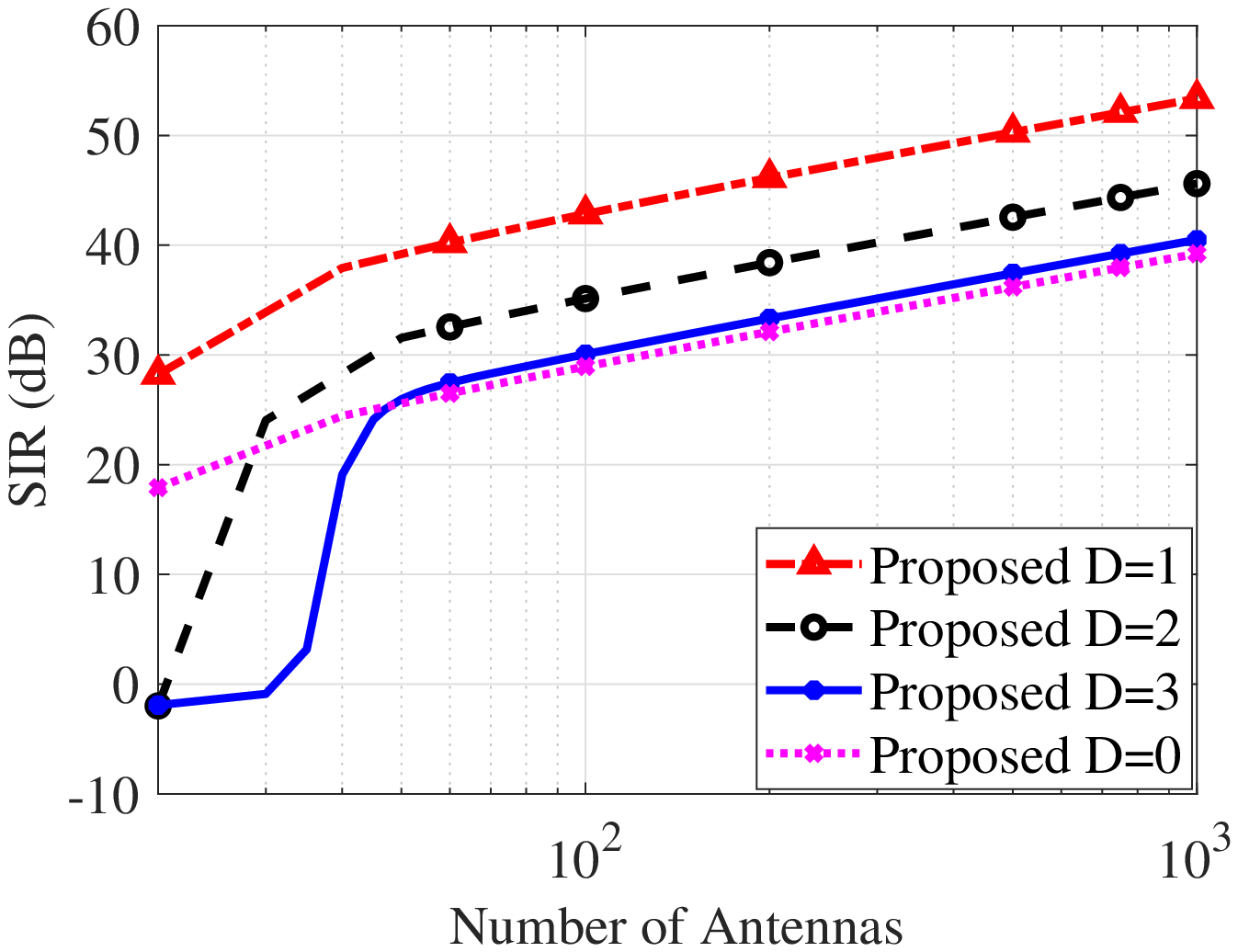}\par \vspace{-0.3cm}\caption{SIR performance as a function of the number of BS antennas.} \label{fig:SINR2}
    \includegraphics[scale=0.39]{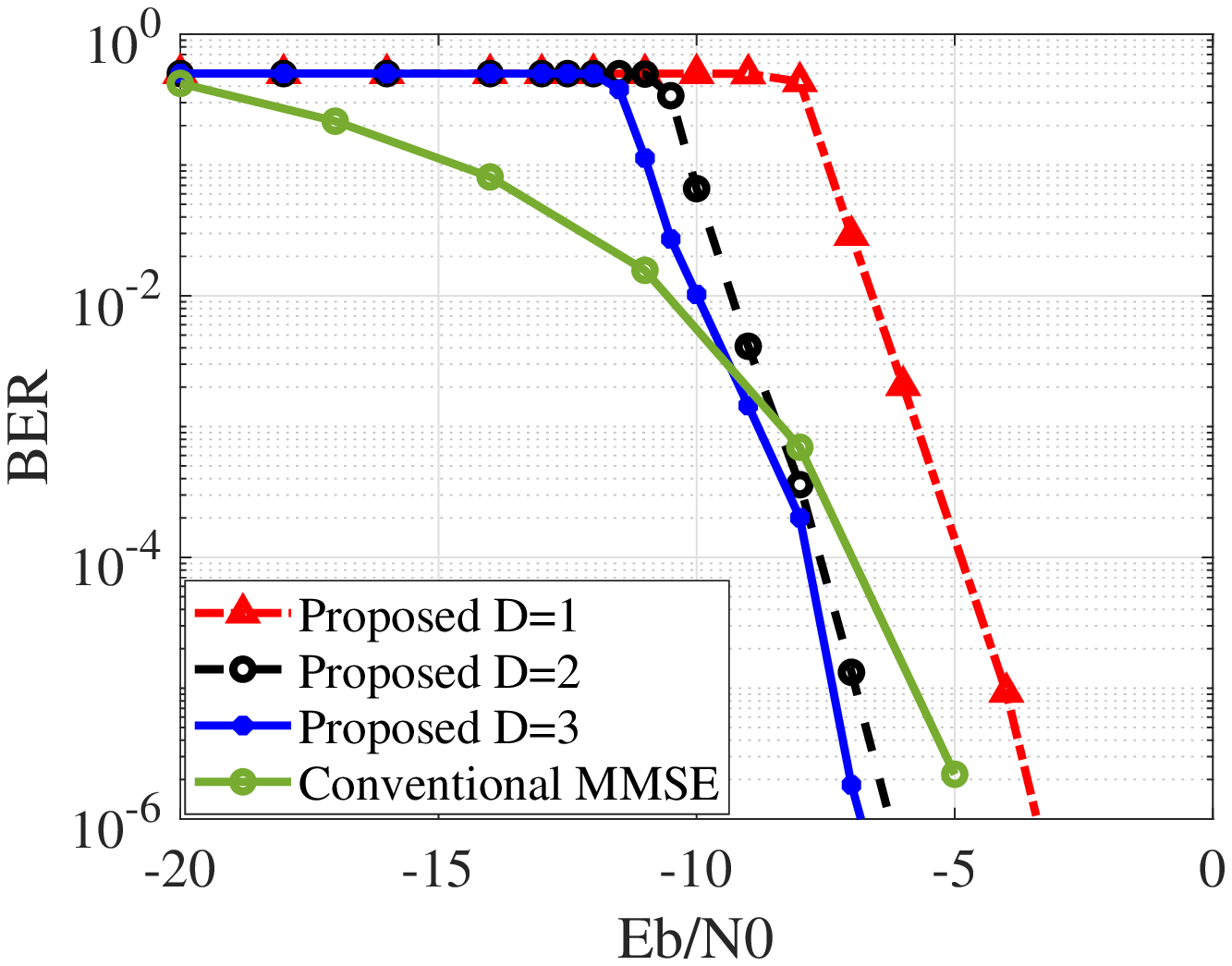}\par \vspace{-0.3cm}\caption{BER performance for our proposed technique and MMSE combining.} \label{fig:BER}
\end{multicols}
\vspace{-0.3cm}
\end{figure*}

While \textit{Proposition}~\ref{prop:1} is sufficient in the asymptotic regime, in practical systems, the aforementioned scaling by $\frac{1}{\alpha_{\Delta m,k}}$ may lead to noise enhancement and, consequently, a performance loss. {For this reason, we rely on our proposed sliding equalization technique as $\alpha_{\Delta m,k}$ is larger for the smaller values of $|\Delta m|$. When $\overline{\bX}_{m}$ is rank-deficient, $\widehat{\boldsymbol{\Lambda}}_{m}$ cannot be estimated. Thus, we resort to utilizing the previously estimated channels on the closest subcarrier. Furthermore, if the PDP is known at the receiver, the scaling term ${\alpha}_{\Delta m,k}$ can be calculated from \eqref{eq:lambdadelta}, instead of using the approximation from \eqref{eq:approxalpha}.}

When the coherence bandwidth is much larger than the subcarrier spacing, the correlation ${\alpha}_{\Delta m,k}$ takes large values for multiple values of $\Delta m$. This sparks the idea of taking advantage of the frequency diversity in addition to the spatial diversity. To this end, improved SINR performance can be achieved by performing equalization for each subcarrier multiple times, each with the channel of a different subcarrier within the coherence bandwidth. This improvement substantially reduces the error propagation issue from \eqref{eq:propLS}.

While the channels from all subcarriers could be considered for the averaging, the channels with a small correlation would bring minimal improvement to the output. Since, the value of ${\alpha}_{\Delta m,k}$ decreases as we increase the value of $|\Delta m|$, only channels of subcarriers within a range are considered. Hence, we call the maximum range considered as the depth,  $D$.  

Ideally, the depth would consider all adjacent subcarriers, at higher and lower indices. However, since the channels are updated in a sliding manner, only the channel on one side of the subcarriers is available at the time of equalization. Therefore, the final output of our proposed technique is achieved with two steps. We perform the first step of the proposed technique by sliding from lower to upper subcarrier indices until the whole frame is equalized. {For the second step, the procedure is realized again, however, now sliding from higher to lower subcarrier indices. It is worth noting that both steps can be realized in parallel as the steps are independent of each other.} The output of both steps are then averaged to obtain the final output of our proposed technique. We distinguish each step with the direction variable $\xi \in [-1,1]$ representing steps one and two, respectively. Each step can be represented as
\be
\widehat{\bX}_{m,\xi}= \frac{1}{D} \sum^{D}_{\Delta m=1} \bPsi_{\xi\Delta m}^{-1}\boldsymbol{\Phi}_{m-\xi\Delta m}^{\rm{MMSE}} \overline{\bY}_{m}. 
\ee
The final output is then obtained as $\widehat{\bX}_{m}=\frac{1}{2}(\widehat{\bX}_{m,-1}+\widehat{\bX}_{m,1})$.
Our proposed technique is summarized in Algorithm \ref{alg:three}.

\setlength{\textfloatsep}{0.5cm}
\begin{algorithm}[t]
\begin{algorithmic}[1]
\caption{Proposed sliding technique with reference pilot at subcarrier index $i$.}
\label{alg:three}

\STATE \textbf{Initialize: }Estimate channel at reference subcarrier (Pilot), obtaining $\widehat{\boldsymbol{\Lambda}}_{i}$ using \eqref{eq:CE}.

\FOR{$\xi=[-1,1]$}
\STATE  $m'=i$ \COMMENT{Define the anchor for closest available CFR}
\FOR{ $m_{\rm{dir}}=1$ to $M-1$}
\STATE $m=((i+\xi m_{\rm{dir}}))_M$ 
\STATE Initialize $\widehat{\bX}_{m,\xi}=\bzero_{K \times N}$ 
\STATE  $c=0$ \COMMENT{Define total channels used for equalization}
\FOR{ $\Delta m =1$ to $D$}
    
    \IF{$\widehat{\boldsymbol{\Lambda}}_{((m-\xi\Delta m))_M}$ was estimated}
        \STATE $\widehat{\bX}_{m,\xi}= \widehat{\bX}_{m,\xi}+\bPsi_{\xi\Delta m}^{-1}\boldsymbol{\Phi}_{((m-\xi\Delta m))_M}^{\rm{MMSE}} \overline{\bY}_{m}$
        \STATE $c = c+1$
    \ENDIF
\ENDFOR     
\IF{$c == 0$}
    \STATE $\widehat{\bX}_{m,\xi}= \bPsi_{m-m'}^{-1}\boldsymbol{\Phi}_{m'}^{\rm{MMSE}} \overline{\bY}_{m}$
\ELSE
    \STATE $\widehat{\bX}_{m,\xi}=\frac{\widehat{\bX}_{m,\xi}}{c} $
\ENDIF

\STATE    $\widehat{\bX}^{{\rm{HD}}}_{m,\xi}$ is obtained from hard decision of $\widehat{\bX}_{m,\xi}$
  
\IF{ $\widehat{\bX}^{{\rm{HD}}}_{m,\xi}$ is invertible}
       
\STATE $ \widehat{\boldsymbol{\Lambda}}_{m}= \overline{\bY}_{m} (\widehat{\bX}^{{\rm{HD}}}_{m})^{\dagger}$ and update $m'=m$
\ELSE
\STATE $ \widehat{\boldsymbol{\Lambda}}_{m}$ is not estimated and $m'$ is not updated
\ENDIF        
        
\ENDFOR     
\ENDFOR

\STATE Obtain $\widehat{\bX}_{m}$ by averaging the two sets of result

\end{algorithmic}
\end{algorithm}

In the following section, we numerically evaluate the performance of our proposed technique. We compare our solution with the conventional LS-based channel estimation and MMSE combining. We show that our proposed channel estimation and equalization method, using only $K$ pilots, outperforms the conventional method, which requires $LK$ pilots.

\vspace{-0.2cm}
\section{Numerical Results}\label{sec:NR}
\vspace{-0.1cm}

In this section, we evaluate the efficacy of our proposed technique and compare it with the conventional LS-based channel estimation and MMSE equalization. In our simulations, we consider the UL transmission of a single frame with $16$-QAM and OFDM modulation, $M=1024$ subcarriers and $N=14$ time symbols. The BS serves $K=7$ users, where $N_{\rm{p}} =7$ time slots are allocated to each user for channel estimation and $ N_{\rm{d}} = 7$ for data transmission. We use the Extended Typical Urban model (ETU) channel model, \cite{lte2009evolved}, which has a coherent bandwidth of $F_{\rm{c}} \approx 200$~kHz. Considering subcarrier spacing of $\Delta f=15$~kHz and transmission bandwidth of $15$~MHz, the channel length is $L=77$. Hence, for the conventional LS-based channel estimators, a total of $LN_{\rm{p}} = 539$ time-frequency resources are allocated for pilot transmission while our proposed technique requires only $N_{\rm{p}} =7$ time-frequency resources. Perfect knowledge of the PDP for each user at the BS is assumed.

In Fig. \ref{fig:SINR}, we evaluate the SINR performance of our proposed technique versus the number of BS antennas for $D=0, 1,2 \text{ and } 3$ and the input SNR of $0$~dB. When $D=0$, we only consider one sliding direction where one estimate of the data symbol at each subcarrier is obtained. The results in Fig. \ref{fig:SINR} show that our proposed technique can effectively average out noise and multiuser interference as the number of BS antennas grows large. Furthermore, when $D=0$, our proposed technique can achieve very close performance to that of the conventional MMSE combiner. This is while for $D=1,2$ and $3$, it leads to the SINR performance gain of around $2$~dB compared to the conventional MMSE combining. The SINR performance of  our proposed technique becomes linear only after deploying a minimum number of BS antennas. This is due to error propagation at the channel estimation stage in \eqref{eq:propLS}. This highlights the benefit of increasing depth as it greatly improves the SINR performance, especially when considering fewer BS antennas or lower input SNR. Furthermore, the frequency diversity predominantly averages out the channel estimation noise in \eqref{eq:CE}. However, as it is shown in Fig.~\ref{fig:SINR2}, in the absence of noise, increasing depth adversely affects the signal-to-interference ratio (SIR). This is due to the approximation in \eqref{eq:propMMSE}. The conventional LS channel estimation and MMSE combining lead to infinite SIR in the absence of noise and thus, cannot be included in the results of Fig.~\ref{fig:SINR2}.

Finally, in Fig. \ref{fig:BER}, we analyze the BER performance of our proposed technique when $Q=200$ BS antennas are deployed. The proposed technique with $D=3$ leads to $2$~dB performance gain at large Eb/N$0$s compared to the conventional method. These results show that increasing $D$ from $1$ to $3$ provides almost $4$~dB performance gain.

\section{~~Conclusion} \label{sec:conc}

In this letter, we proposed a sliding joint channel estimation and equalization technique that significantly reduces the pilot overhead in massive MIMO. With the CFR estimates on a single reference subcarrier, we showed that the data symbols on the adjacent subcarriers can be detected with linear combining. The detected data symbols are then utilized as virtual pilots to update their corresponding channel estimates. The updated CFRs are then used for equalization of the remaining data symbols, in a sliding manner. With this approach, we obtain multiple estimates of the transmitted data symbols at each subcarrier that are averaged to provide additional frequency diversity to the spatial diversity gains of massive MIMO. Through extensive numerical analysis, we showed that our proposed technique achieves improved SINR and BER performance compared to the conventional linear combiners. 

\bibliographystyle{IEEEtran}
\vspace{-0.2cm}
\bibliography{main}
\vspace{-0.1cm}

\end{document}